\def\etal{\textit{et al.}}
\def\Z{\mathbb{Z}}
\def\Fmin{\mathcal{F}_{\mathrm{min}}}
\let\phi=\varphi
\let\meet=\land
\let\join=\lor
\let\bigmeet=\bigwedge
\let\bigjoin=\bigvee
\def\set#1{\{\, #1 \,\}}
\def\abs#1{\mathopen|#1\mathclose|}
\def\class#1{\textsf{#1}}
\def\cpl{\overline}
\def\evol#1{\includegraphics[scale=0.75]{img/#1}}
\newenvironment{evolution}%
{\bgroup%
  \begin{tabular*}{\textwidth}{l@{\extracolsep{\fill}}r}}%
    {\end{tabular*}\egroup}
\def\br{\linebreak[1]} 
\begin{document}
\title{An Invitation to Number-Conserving Cellular Automata}
\author{Markus Redeker}
\institute{Hamburg, Germany. \email{markus2.redeker@mail.de}}
\maketitle

\abstract{Number-conserving cellular automata are discrete dynamical
  systems that simulate interacting particles like e.\,g.\ grains of
  sand. In an earlier paper, I had already derived a uniform
  construction for all transition rules of one-dimensional
  number-conserving automata. Here I show in greater detail how
  one can simulate the automata on a computer and how to find
  interesting rules. I show several rules that I have found this way
  and also some theorems about the space of number-conserving
  automata.}

\section{Introduction}

One can think of cellular automata as systems of particles that may
destroy each other, transform each other or create new particles when
they collide. A special class of cellular automata are therefore those
in which no creation or destruction occurs and the number of particles
stays the same. The particles in such an automaton resemble the
``particles'' of everyday life, like grains of sand or cars in a
street.

In \cite{Redeker2022}, I have derived a uniform method to construct
all one-dimensional number-conserving cellular automata with only one
kind of particles. But I could show only a few of the consequences
that this construction has on the theory of one-dimensional cellular
automata, and also only few interesting example of number-conserving
automata. This is the subject of the present article: It contains
examples for number-conserving automata, helpful hints for those that
want to explore them on their own on a computer, and finally a few
theorems about the space of number-conserving rules.

\begin{figure}[t]
  % width: 155, generations: 110
  \begin{evolution}
    \evol{snapshot-texture} & \evol{snapshot-gliders} \\
    \evol{snapshot-crystallisation} & \evol{snapshot-collection}
  \end{evolution}
  \caption{Snapshots of number-conserving cellular automata. \\
    \emph{About the images and parameters:} Time goes upward. White
    cells are empty, black cells totally filled with particles, and
    grey are cells in between. $C$ is the capacity of the cellular
    automaton, $f$ its flow function and
    $\delta = (\delta_0, \dots, \delta_C)$ the vector of initial cell densities: Each
    $\delta_i$ is the fraction of cells in the initial configuration that
    are in state $\pi_i$. \\
    % m(12,1;1)+m(22,1;3)+m(20,3;-2)
    \emph{Top left:} A rule that develops into several complex textures.
    \emph{Parameters:} $C = 3$,
    $f =\br m(12,1;1) \join m(22,1;3) \join m(20,3;-2)$,
    $\delta = (0.25, 0.25,\br 0.25, 0.25)$. \\
    % m(101,0;2)+m(022,2;2)+m(012,2;2)+m(211,1;2)
    \emph{Top right:} A system of interacting gliders. For this
    pattern, the initial particle density must be
    relatively low. \emph{Parameters:} $C = 2$,
    $f = m(101,0;2) \join m(022,2;2) \join m(012,2;2) \join m(211,1;2)$,
    $\delta = (0.66, 0.17, 0.17)$. \\
    % m(11,12;0)+m(02,20;0)+m(02,11;0)+m(12,12;1)+m(10,20;1)+m(20,20;0)
    \emph{Bottom left:} A rule where the particles crystallise into regular
    patterns. \emph{Parameters:} $C = 2$,
    $f = m(11,12;0) \join m(02,20;0) \join m(02,11;0) \join m(12,12;1)
    \join m(10,20;1) \join m(20,20;0)$, $\delta = (0.33, 0.33, 0.33)$. \\
    % m(3,4;0)+m(1,2;1)+m(2,4;0)+m(3,0;2)
    \emph{Bottom right:} In the centre is a block of completely filled
    cells that absorbs the particles right of it. \emph{Parameters:}
    $C = 4$,
    $f = m(3,4;0) \join m(1,2;1) \join m(2,4;0) \join m(3,0;2)$,
    $\delta = (0.96, 0.1, 0.1, 0.1, 0.1)$.}
  \label{fig:snapshots}
\end{figure}
The aim of this all is to show that number-conserving cellular
automata are a research field with many interesting questions and new
phenomena.

\subsection{About number conservation in general}

Number-conserving cellular automata appear in a larger context of
simulating moving particles with cellular automata.

To this context belongs already the very first cellular automaton,
John von Neumann's self-replicator and universal constructor
\cite{Burks1970}: It has ``signals'' moving on ``wires'', but their
number is preserved only when the wires do not branch or end. Later
self-replicators by Codd \cite{Codd1968} and Langton
\cite{Langton1984} also have signals and wires.

For the purpose of particle system simulation, \emph{ad hoc}
constructions are often enough to get number-conserving automata with
the properties one wants. An example are lattice gases
\cite{Wolf-Gladrow2005}, which simulate systems of colliding particles
and are widely used in physics. Another is the Nagel-Schreckenberg
model \cite{NagelSchreckenberg1992} for street traffic. In it, the
``particles'' whose number is conserved are cars. But these models
only provide a subset of all number-conserving cellular automata, and
they do not intend to do anything else.

There are two conceptually different approaches for finding all
number-conserving cellular automata: One can define an invariant,
called ``number of particles'', and then ask for the cellular automata
for which it is preserved. Or one can understand the automaton as
simulating individual particles, and for each time step one needs to
know for each particle to which place it has moved.

For the first approach, one needs to define number conservation
precisely. There are different ways to do that, but Durand, Formenti
and Róka \cite{DurandFormentiEtAl2003} show that some of them are
equivalent. The most influential paper about number conservation via
an invariant is certainly that by Hattori and Takesue
\cite{HattoriTakesue1991} in which they derive a criterion to verify
that a given one-dimensional cellular automaton is number-conserving.
It provides the starting point for most later papers on number
conservation. Boccara, Fuk\'s and Procyk
\cite{Boccara1998,BoccaraFuks2002,FuksProcyk2019} searched with the
criterion for number-conserving cellular automata in one dimension.
Durand \etal\ \cite{DurandEtAl2003} found a version of the criterion
for two dimensions.

The second approach was used by Boccara and Fuk\'s \cite{Boccara1998},
who searched for one-dimensional cellular automata that have an
interpretation in terms of moving particles; Moreira
\cite{Moreira2004} and Ishizaka \etal\ \cite{IshizakaEtAl2015} did the
same in two dimensions. This approach looks at first as if it were the
more restrictive one, but Fuk\'s \cite{Fuks2000} (in one dimension)
and Kari and Taati \cite{KariTaati2008} (in two dimensions) have shown
how to find particle movements for any cellular automata that preserve
the number of particles.

Instead of searching the rule space, one can also try to construct
number-conserving automata directly. In one dimension and for radius
$\frac12$, this was done by Imai and Alhazov \cite{ImaiAlhazov2010},
while Wolnik \etal\ \cite{WolnikEtAl2019} constructed
number-conserving cellular automata for the von Neumann neighbourhood
in any dimension. They all use the invariant viewpoint. The work of
Imai and Alhazov is also a precursor of my own work.

For further references on number conservation see Bhattacharjee
\etal\ \cite[Sec.~4.6]{BhattacharjeeNaskarEtAl2016}, Taati
\cite[Sec.~5]{Taati2012} or the introduction of Wolnik \etal\
\cite{WolnikEtAl2019}.

% \medskip\hrule\medskip

% A related viewpoint is that of Fuk\'s and Mudiyanselage
% \cite{FuksMudiyanselage2022}, who treat a number-conserving automaton
% as a simulation of a diffusion process.

% Moreira \cite{Moreira2003} extends the theory to automata in which the
% ``number of particles'' in a cell may be negative and shows that one
% does not really get new rules.

% (It was later extended by Takesue \cite{Takesue1995} to the more
% general ``staggered invariants''.)

% A question that is sometimes asked is whether number-conservation
% restricts the computational and other abilities of the cellular
% automata. But Morita \etal\ \cite{Morita1999} constructed a
% two-dimensional time-reversible lattice gas and computationally
% universal. Imai and Alhazov \cite{ImaiAlhazov2010} showed that radius
% $\frac12$ one-dimensional number-conserving automata can simulate all
% radius $\frac12$ automata or radius 1 number-conserving automata.
% Moreira \cite{Moreira2003} found a number-conserving automaton that
% can simulate all one-dimensional automata.

% \medskip\hrule\medskip

\section{Summary: How to construct number-conserving cellular
  automata}

In this section I give a short summary of the results in
\cite{Redeker2022}, as far as they are relevant for the present text.

\subsection{Cells and Cell States}
\label{sec:cell-states}

A one-dimensional cellular automaton is a discrete dynamical system in
which at each moment in time, the configuration is a doubly infinite
sequence of small automata, the \emph{cells}.

In every cellular automaton, each cell has one of a finite number of
states and the set of all possible states is the same for all cells.
In a number-conserving automaton, the cells are thought as containers
for a finite number of indistinguishable particles. The state of a
cell is therefore an element of the set
\begin{equation}
  \label{eq:states-set}
  \Sigma = \Sigma_C = \{\pi_0, \dots \pi_C \}
\end{equation}
where $C$ is the \emph{capacity} of the cells, and a cell is in the
state $\pi_i$ if it contains $i$ particles.\footnote{The theory of
  \cite{Redeker2022} also supports cellular automata in which the cell
  state contains more information than just the number of particles,
  but we will not consider them here.}

A configuration is then an element of the set $\Sigma^\Z$. If $a$ is a
configuration, the state of the cell at position $i$ is $a_i$, so that
we can write,
\begin{equation}
  \label{eq:configuration}
  a = \dots a_{-3} a_{-2} a_{-1} a_0 a_1 a_2 a_3 \dots\,.
\end{equation}

We will also need finite sequences of cell states. They are called
\emph{strings} or \emph{neighbourhoods}. The set of all strings is
$\Sigma^* = \bigcup_{k \geq 0} \Sigma^k$. If $b \in \Sigma^k$ is a
string, it is typically written as
\begin{equation}
  \label{eq:string}
  b = b_0 b_1 \dots b_{k-1}\,.
\end{equation}
The number $k$ is the \emph{length} of $b$ and written as $\abs{b}$.
For the empty string we write $\epsilon$.

\emph{Substrings} of a string or a configuration are written in the
form
\begin{equation}
  \label{eq:substring}
  a_{j:k} = a_j a_{j+1} \dots a_{j+k-1}\,.
\end{equation}

Finally, there is the \emph{particle content} of a cell, string, or
configuration. For a single cell $\alpha$, it is written $\# \alpha$,
so that we have $\# \pi_i = i$. For a string, it is
\begin{equation}
  \label{eq:string-content}
  \# a = \sum_i \# a_i \,.
\end{equation}
The content of a configuration only exists if all except of a finite
number of its cells are empty; then it is defined as
in~(\ref{eq:string-content}).

For strings and cells we will also use the \emph{complementary content}
\begin{equation}
  \label{eq:complementary-content}
  \#^c a = C \abs{a} - \#a\,.
\end{equation}
It is the number of particles that the cells in $a$ can accept
particles until all of them are full.

\subsection{Transition Rules}

That what turns a cellular automaton into a dynamical system is the
existence of a \emph{global transition rule}
$\hat{\phi} \colon \Sigma^\Z \to \Sigma$. The sequence
\begin{equation}
  c, \hat{\phi}(c), \hat{\phi}^2(c), \dots, \hat{\phi}^t(c), \dots
\end{equation}
is the \emph{evolution} (or \emph{orbit}) of an initial configuration
$c$; the configuration $\hat{\phi}^t(c)$ is the state of the cellular
automaton at \emph{time step} (or \emph{generation}) $t$.

Among the discrete dynamical systems with doubly infinite cell
sequences as configurations, cellular automata are distinguished by
the existence of two integers $r_1$, $r_2 \geq 0$ and a \emph{local
  transition rule} $\phi \colon \Sigma^{r_1+r_2+1} \to \Sigma$ so that for all
$i \in \Z$,
\begin{equation}
  \label{eq:local-global-rule}
  \hat{\phi}(c)_i = \phi(c_{i-r_1:r_1+r_2+1}) \,.
\end{equation}
This means that the next state of the cell in position $i$ only
depends on a neighbourhood that reaches $r_1$ cells to the left of
this cell and $r_2$ cells to the right. The numbers $r_1$ and $r_2$
are called the \emph{left} and \emph{right radius} of the transition
rule.

Now consider a second cellular automaton $\hat{\phi}'$ with the same
local transition function $\phi$ but left and right radii
$r'_1 = r_1 + k$ and $r'_2 = r_2 - k$ for some integer $k$. The two
automata show almost the same behaviour except that under
$\hat{\phi}'$, the states of the cells in the next generation are
shifted $k$ cells to the right (for positive $k$) compared to the
behaviour under $\phi$. We call two such automata \emph{equivalent}.

\begin{figure}[t]
   % width: 155, generations: 110
   \begin{evolution}
     \evol{equivalent-1} & \evol{equivalent-2} \\
     \evol{equivalent-3} & \evol{equivalent-4}
   \end{evolution}
   \caption{Equivalent cellular automata and the ``stick mantis'' rule. \\
     These images show equivalent cellular automata that differ only
     in their horizontal shift. In an image of a cellular automaton
     evolution, the structures that do
     not move especially stand out, so that each of the images highlights
     different aspects of essentially the same flow rule. But the top
     right automaton, with its mantis-like structures, is certainly the
     most natural version.
     \\
     % m(1,100;0)+m(1,010;1)
     % m(11,00;1)+m(10,10;1)
     % m(110,0;1)+m(101,0;2)
     % m(1100,1)+m(1010,2)
     \emph{Parameters:} $C = 1$, $\delta = (0.5, 0.5)$; top left:
     $f = m(1,100;0) \join m(1,010;1)$, top right:
     $f = m(11,00;1) \join m(10,10;1)$, bottom left:
     $f = m(110,0;1) \join m(101,0;2)$, bottom right:
     $f = m(1100,1) \join m(1010,2)$.}
  \label{fig:equivalent}
\end{figure}
Among equivalent cellular automata, those with a one-sided
neighbourhood are the easiest to treat theoretically, for example in
proofs. We will use here the ``left rules'', that are those with
$r_2 = 0$, whenever we speak about cellular automata in a more
abstract context. Concrete automata may however look vastly different
even if they are equivalent (see Figure~\ref{fig:equivalent}), and so
we need also to consider arbitrary values for $r_1$ and $r_2$.

\subsection{Number Conservation}

We now define a global transition rule $\hat{\phi}$ as
\emph{number-conserving} if
\begin{equation}
  \label{eq:number-conserving}
  \# \hat{\phi}(c) = \# c
\end{equation}
for all configurations $c \in \Sigma^\Z$ for which $\# c$ exists. The
local transition rule $\phi$ that belongs to $\hat{\phi}$ is then also
number-conserving.

% This is not the only definition of number conservation; other,
% equivalent definitions can be found in \cite{DurandFormentiEtAl2003}.

The following theorem -- which incorporates several definitions --
recapitulates all the facts about number conservation in
\cite{Redeker2022} that will be used in this chapter.

\begin{theorem}[Number conservation and flows]
  \label{thm:number-conservation}
  \begin{enumerate}
  \item Let $\phi \colon \Sigma^{r_1+r_2+1} \to \Sigma$ be the local transition
    function of a number-conserving cellular automaton and let $C$ be
    its capacity. We set $\ell = r_1 + r_2$ and call it the \emph{flow
      length} of $\phi$ -- it will occur over and over on the following
    pages.

    Then there exists a \emph{flow function}
    $f \colon \Sigma^\ell \to \Z$ for $\phi$ such that for all $a \in \Sigma^{\ell + 1}$,
    \begin{equation}
      \label{eq:flow-definition}
      \# \phi(a) = \# a_{r_1} + f(a_{0:\ell}) - f(a_{1:\ell})\,.
    \end{equation}
    For our choice of $\Sigma$, the function $f$ therefore determines
    $\phi$ completely. \cite[Thm.~3.1]{Redeker2022}

  \item Let $a \in \Sigma^{r_1}$ and $b \in \Sigma^{r_2}$. Then the value of
    $f(a b)$ is bounded by the inequalities
    \begin{subequations}
      \begin{equation}
        \label{eq:flow-bounds}
        - \# b \leq f(a b) \leq \# a\,.
      \end{equation}
      For $r_2 = 0$, this becomes
      \begin{equation}
        \label{eq:flow-bounds-left}
        0 \leq f(a) \leq \# a \,.
      \end{equation}
    \end{subequations}

  \item We will write $\mathcal{F}$ or $\mathcal{F}(r_1, r_2, C)$ for the set of all flows
    with given $r_1$, $r_2$ and $C$. This set is a distributive
    lattice \cite[Chap.~4]{Davey2002} with its operations given by
    \begin{subequations}
      \begin{align}
        \label{eq:lattice-meet}
        (f \meet g)(a) &= \min\{ f(a), g(a)\}, \\
        \label{eq:lattice-join}
        (f \join g)(a) &= \max\{ f(a), g(a)\}
      \end{align}
      \label{eq:lattice-operations}
    \end{subequations}
    for all $f$, $g \in \mathcal{F}$ and $a \in \Sigma^\ell$. The
    order relation on this lattice is therefore
    \begin{equation}
      \label{eq:lattice-order}
      f \leq g
      \qquad\text{iff}\qquad
      \forall a \in \Sigma^\ell \colon f(a) \leq g(a)\,.
    \end{equation}

    The flow lattices $\mathcal{F}(r_1, r_2, C)$ with the same value
    of $\ell$ are isomorphic, and we can therefore speak of the
    lattice $\mathcal{F}(\ell, C)$. \cite[Thm.~6.1]{Redeker2022}

  \item\label{item:minimal-elements} The minimal elements of
    $\mathcal{F}$ according to the lattice order~\eqref{eq:lattice-order} are
    the flows $m(a, b; k)$ with $a \in \Sigma^{r_1}$,
    $b \in \Sigma^{r_2}$, $-\#b \leq k \leq \# a$ and
    \begin{equation}
      \label{eq:minimal-elements}
      m(a, b; k) =
      \bigmeet \set{ f \in \mathcal{F} \colon f(a b) \geq k}\,.
    \end{equation}
    In case of $r_2 = 0$, we write $m(a, k)$ for $m(a, \epsilon; k)$.
    The set of all these \emph{minimal flows} is $\Fmin$.

    As it is true for all distributive lattices, all flows can be
    written as a maximum of a set of minimal flows.
    \cite[Sect.~6.3]{Redeker2022}
    
  \item For the left neighbourhood, there is an explicit form for
    $m(a, k)$. Namely, if $a$, $b \in \Sigma^\ell$, then the function value
    $m(a, k)(b)$ is the smallest number so that
    \begin{subequations}
      \begin{align}
        \label{eq:minflow-def-0}
        m(a, k)(b) &\geq 0, \\
        \label{eq:minflow-def-min}
        m(a, k)(b) &\geq
                     k - \min\{\abs{u}, \abs{u'}\} C - \# w - \#^c w'
      \end{align}
      \label{eq:minflow-def}
    \end{subequations}
    for all factorisations $a = u v w$, $b = u' v w'$.
    \cite[Thm.~6.2]{Redeker2022}

    For an arbitrary neighbourhood, we have
    \begin{equation}
      \label{eq:minflow-bothside}
      m(a_1, a_2; k)(b) = m(a_1 a_2, k) - \# a_2
    \end{equation}
    where $a_1 \in \Sigma^{r_1}$ and $a_2 \in \Sigma^{r_2}$.
    \cite[Sect.~7.2]{Redeker2022}
  \end{enumerate}
\end{theorem}

Equation~\eqref{eq:minflow-bothside} is also the reason why all flow
lattices with the same values for $\ell$ and $C$ are isomorphic.

A flow function measures the particle movement across a boundary in a
configuration. This means: Declare a boundary between two cells an a
configuration $c$, dividing it into a left side $c_L$ and a right side
$c_R$. Let the string $a$ consist of the $r_1$ cells left of the
boundary and the string $b$ of the $r_2$ cells right of it. Then
$f(a b)$ is the number of particles that cross the boundary to the
right during the transition from $c$ to $\hat{\phi}(c)$. A negative
value means that there is a movement to the left. (Or in a more
prosaic way: The particle content of $c_L$ decreases by $f(a b)$ while
that of $c_R$ increases by the same amount.)

This also explains the bounds in~\eqref{eq:flow-bounds}: There is no
guarantee that more than $\# a$ particles are in $c_L$, or $\# b$
particles in $c_R$, which puts limits on the number of particles that
can cross the boundary.

\subsection{Example: The Shift Rules}
\label{sec:shift-rules}

The shift rules are certainly the only number-conserving rules that
everyone knows. They therefore serve as a good example how the theory
of flows works in a concrete case.

A \emph{shift rule} is one of the transition rules $\hat{\sigma}^k$ for which
\begin{equation}
  \label{eq:shift-rule}
  \hat{\sigma}(c)_i = c_{i-k}
\end{equation}
for all $c \in \Sigma^\Z$ and $i \in \Z$. It is the rule that moves all
particles $k$ cells to the right.

This translates easily into a flow function. If $k \geq 0$, then the
function $f \colon \Sigma^k \to \Z$ with $f(a) = \# a$ is the flow function
that belongs to $\hat{\sigma}^k$. It is a ``left'' flow function, analogous
to the left transition rules, with $r_1 = k$ and $r_2 = 0$. One can
then check, via~\eqref{eq:flow-definition}, that the function
$\sigma^k \colon \Sigma^{k+1} \to \Sigma$ with $\sigma^k(a) = a_0$ is the transitions
function that belongs to $f$. The case for $k \leq 0$ is symmetric.

Later, in Theorem~\ref{thm:shift-flows}, we will see that the flow
function that belongs to $\hat{\sigma}^k$ is also a minimal flow.

\section{Algorithmic Aspects}

\subsection{Computation of the Transition Rule}

In order to experiment with number-conserving cellular automata, the
definitions of Theorem~\ref{thm:number-conservation} must be
translated into a program. For my own experiments, I have used the
following data structures:
\begin{itemize}
\item A class \class{MinFlow} which represents a minimal flow.

  An object of this class stores the values $a \in \Sigma^\ell$ and
  $k \in \Z$ and can compute $m(a, k)(b)$ for all $b \in \Sigma^\ell$.
  
\item A class \class{FlowSum} which stores a non-empty set $S$ of
  \class{MinFlow} objects. It can compute the flow
  $F(b) = \bigjoin_{f\in S} f(b)$ for all $b \in \Sigma^\ell$.

  For efficiency reasons, the set $S$ must also be an \emph{antichain}
  \cite[Sect.~1.3]{Davey2002}, which means that there are no $f$,
  $g \in S$ with $f < g$. It is clear that all flows can be realised
  with an $S$ that is an antichain.
  
\item A class \class{Rule} which stores a \class{FlowSum} object $F$
  and can compute
  \begin{equation}
    \label{eq:rule-value}
    \# \phi(u) = \# u_\ell + F(u_{0:\ell}) - F(u_{1:\ell})
  \end{equation}
  and therefore $\phi(u)$ for all $u \in \Sigma^{\ell + 1}$.
\end{itemize}

It would look as if these classes could only compute $\phi$ for a
left-side neighbourhood. But the two-sided flow functions
$m(a, a'; k)$ are constructed in such a way that $m(a, a'; k)$ has the
same local transition rule $\phi$ as $m(a a', k)$, and this is also true
for the maximum of a set of such functions.

Therefore, to find the value of $\phi(b)$ for
$F = \bigjoin_{i=1}^n m(a_i, a'_i; k)$, it is enough to compute it from
$\bigjoin_{i=1}^n m(a_i a'_i, k)$ with the data structures just shown.

For efficiency, it is better not to compute the values of the flow
function and the transition rule each time they are needed. A
\class{MinFlow} object should instead store the map $b \mapsto m(a, k)(b)$
in memory, and a \class{SumFlow} object should store $u \mapsto \phi(u)$. If
$\ell$ or $C$ are large, these maps might become quite big, however. Then
a ``lazy'' approach is useful, where these maps are initially empty
and are only populated when a value is computed.

There remains the computation of the values of $m(a, k)$.
Algorithm~\ref{alg:minimal-flow} shows a possibility. It uses the
notation of Section~\ref{sec:cell-states}.

\begin{algorithm}[t]
  \SetKwData{val}{val}
  \SetKwData{newval}{new\_val}
  \SetKwData{ulen}{u\_len}
  \SetKwData{vlen}{v\_len}
  \SetKwData{wlen}{w\_len}
  \SetKwData{uulen}{u'\_len}
  \SetKwData{wwlen}{w'\_len}

  \KwIn{$a \in \Sigma^\ell$, $k \in \Z$, $b \in \Sigma^\ell$}
  \KwOut{$m(a, k)(b)$}
  \BlankLine

  $\val \gets 0$\;
  \For{$\ulen \in \{0, \dots, \ell\}$}{
    \For{$\vlen \in \{0, \dots, \ell - \ulen\}$}{
      $\wlen \gets \ell - \ulen - \vlen$\;
      $v \gets a_{\ulen:\vlen}$\;
      $w \gets a_{\ulen+\vlen:\wlen}$\;
      \For{$\uulen \in \{0, \dots, \ell - \vlen\}$}{
        \If{$b_{\uulen:\vlen} = v$}{
          $\wwlen \gets \ell - \uulen - \vlen$\;
          $w' \gets b_{\uulen+\vlen:\wwlen}$\;
          $\newval \gets k - \min(\ulen, \uulen) \times C - \# w - \#^c w'$\;
          $\val \gets \max(\val, \newval)$\;
        }
      }
    }
  }
  \Return{\val}\;
  \caption{Values of $m(a, k)$.}
  \label{alg:minimal-flow}
\end{algorithm}

\subsection{Exploring the Rule Space}
\label{sec:exploring-rule-space}

After implementing the transition rule algorithm, the next thing to do
is finding interesting rules. This can be done with a variant of the
mutation algorithm in Andrew Wuensche's \texttt{ddlab} program
\cite[Sect.~32.5.4]{Wuensche2018}.

Wuensche's algorithm modifies the local transition rule $\phi$ by
changing its output for one randomly chosen neighbourhood. One
explores the rule space by starting with an arbitrary rule and
mutating it repeatedly until a rule with an interesting behaviour
appears. In \texttt{ddlab}, it is also possible to step back to
earlier rules if one is in a boring region of the rule space.

But this algorithm does not preserve number conservation. We need a
new one that does this. It is an algorithm with two kinds of mutations
and a method to choose one of them at random. The algorithm operates
directly on \class{FlowSum} objects; it modifies the set
$S \subseteq \Fmin$ that represents flows of the form
$F = \bigjoin_{f \in S} f$.

We have then the following mutation algorithms:
\begin{description}
\item[\textbf{Upward Mutation:}] Choose a random $a \in \Sigma^\ell$ and set
  $k = F(a) + 1$. Replace $S$ with $S \cup \{ m(a, k) \}$, then
  simplify\footnote{Simplification will be described below in
    Section~\ref{sec:simplification-comparison}.} $S$.

  This is only possible if $k \leq \# a$, since otherwise $m(a, k)$ is
  not a flow. In this case, the mutation is not successful.

\item[\textbf{Downward Mutation:}] Choose a random element
  $m(a, k) \in S$ and replace $S$ with
  $S \setminus \{ m(a, k) \} \cup \{ m(a, k - 1) \}$, then simplify $S$.

  This is only possible if $k > 0$, since otherwise $m(a, k - 1)$ is
  not a flow. In this case, the mutation is not successful.

\item[\textbf{Mutation:}] Try repeatedly and at random either an
  upward or a downward mutation until one of them succeeds.

  The algorithm terminates because for every set $S$, either an upward
  or a downward mutation is always possible.
\end{description}

This mutation algorithm works well unless the values of $\ell$ or $C$ are
large. Then the search stays for a long time in boring regions of the
rule space where the evolution of the cellular automata looks chaotic
with no discernible features.

\begin{figure}[t]
   % width: 155, generations: 110
   \begin{evolution}
     \evol{density-high} & \evol{density-medium} \\
     \evol{density-low} & \evol{density-gaseous}
   \end{evolution}
   \caption{Different densities, with a ``crystalline'' rule. \\
     One can see this as the transition from a ``solid'' state to a
     ``gaseous'' state, with two images of the ``liquid'' state in
     between. Could there be a meaningful thermodynamics for
     number-conserving cellular automata?
     \\
     % m(30,0;1)+m(30,1;1)+m(01,0;1)+m(31,0;1)+m(13,0;2)+m(11,0;1)+m(03,0;2)+m(33,3;2)
     \emph{Parameters:} $C = 3$,
     $f = m(01,0;1) \join m(03,0;2) \join m(11,0;1) \join m(13,0;2)
     \join m(30,0;1) \join m(30,1;1) \join m(31,0;1) \join m(33,3;2)$.
     Top left: $\delta = (0.4, 0.32, 0.32, 0.32)$, top right:
     $\delta = (0.25, 0.25, 0.25, 0.25)$, bottom left:
     $\delta = (0.49, 0.17, 0.17, 0.17)$, bottom right:
     $\delta = (0.82, 0.06, 0.06, 0.06)$.}
  \label{fig:density}
\end{figure}
When searching, it is good to vary the initial particle density from
time to time: Number-conserving cellular automata have often strongly
differing behaviour at different particle densities
(Figure~\ref{fig:density}). One should also switch between equivalent
transition rules: Patterns that appear in the evolution of one variant
may not be visible with another variant of the same rule.

Finally it is useful if besides the general mutation command, one has
explicit commands for upward and downward mutations. They help to move
the search out of uninteresting regions.

All number-conserving rules in the illustrations were found with these
methods.

\subsection{Simplification and Comparison}
\label{sec:simplification-comparison}

What remains to describe is the simplification process. The
\emph{simplification} of a set $S \subseteq \mathcal{F}$ is the smallest set $S' \subseteq S$
such that $\bigjoin_{f \in S'} f = \bigjoin_{f \in S} f$. It is easy to
see that
\begin{equation}
  \label{eq:simplification}
  S' = \set{ f \in S \colon \not\exists g \epsilon S \colon f < g}
\end{equation}
and that $S'$ therefore is an antichain, as required in the definition
of \class{FlowSum}. The only problem, from an algorithmic viewpoint,
is the comparison $f < g$.

If we were to compare $f$ and $g$ directly according
to~\eqref{eq:lattice-order}, we would have to compare $f(a)$ and
$g(a)$ for all $a \in \Sigma^\ell$, that are $(C + 1)^\ell$ comparisons. But in
case of the mutation algorithm, we always compare minimal flows. And
then there is a faster method, as the following theorem shows.

\begin{theorem}[Comparison]
  \label{thm:minimal-compare}
  Let $m(a, j)$ and $m(b, k)$ be two minimal flows. Then
  \begin{equation}
    \label{eq:minimal-compare}
    m(a, j) \leq m(b, k)
    \qquad\text{iff}\qquad
    m(a, j)(a) \leq m(b, k)(a)\,.
  \end{equation}
\end{theorem}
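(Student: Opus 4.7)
The forward implication is immediate: if $m(a, j) \leq m(b, k)$ holds pointwise on all of $\Sigma^\ell$, then in particular it holds at the single point $a$. So the entire content of the theorem lies in the converse, and I would concentrate there.

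For the converse, the plan is to unpack the defining expression~\eqref{eq:minimal-elements} of minimal flows. I would first argue that $m(a, j)(a) \geq j$. This falls out of the definition $m(a, j) = \bigmeet\set{f \in \mathcal{F} \colon f(a) \geq j}$: the lattice operation $\meet$ is pointwise minimum by~\eqref{eq:lattice-meet}, so evaluating the meet at $a$ gives an infimum over values $f(a)$ that are all $\geq j$, hence the result is itself $\geq j$. (Since $\mathcal{F}$ is finite, there is no issue of existence of the infinite meet.)

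Combining this with the hypothesis $m(a, j)(a) \leq m(b, k)(a)$ yields $m(b, k)(a) \geq j$. But then $m(b, k)$ lies in the set $\set{f \in \mathcal{F} \colon f(a) \geq j}$ whose meet defines $m(a, j)$, and any element of a set is $\geq$ the meet of that set. Therefore $m(b, k) \geq m(a, j)$, which is the desired conclusion.

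There is no real obstacle here: once one reads~\eqref{eq:minimal-elements} as saying ``$m(a, j)$ is the smallest flow achieving at least $j$ at $a$'', the theorem is essentially a tautology. The only subtlety worth flagging in the write-up is the finiteness of $\mathcal{F}(r_1, r_2, C)$, which guarantees that the meet defining $m(a, j)$ is a genuine pointwise minimum over a finite collection, so that the inequality $m(a, j)(a) \geq j$ is justified term by term.
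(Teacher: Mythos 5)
Your proof is correct and its substantive half -- deducing $m(a,j) \leq m(b,k)$ from $m(b,k)(a) \geq j$ via the meet characterisation~\eqref{eq:minimal-elements} -- is essentially the same argument the paper gives. You dispatch the forward implication more directly (it is immediate from the pointwise order~\eqref{eq:lattice-order}), where the paper uses a slightly roundabout case analysis, but this is a cosmetic difference on the trivial direction.
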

\begin{proof}
  We set $f_a = m(a, j)$ and $f_b = m(b, k)$. Then the theorem states
  that $f_a \leq f_b$ if and only if $f_a(a) \leq f_b(a)$.

  First we assume that $f_a(a) \leq f_b(a)$.
  From~\eqref{eq:minimal-elements} now follows that
  \begin{align}
    f_a = m(a, \epsilon; j)
    &= \bigmeet\set{g \in \mathcal{F} \colon g(a) \geq j} \notag \\
    &= \bigmeet\set{g \in \mathcal{F} \colon f_a(a) \leq g(a)}
  \end{align}
  and that if $f_a(a) \leq g(a)$, then $f_a \leq g$. Replacing $g$ with
  $f_b$, we then can conclude that from $f_a(a) \leq f_b(a)$ follows
  $f_a \leq f_b$. This proves one direction of the theorem.

  For the other direction, we notice that there are three
  possibilities for the relation between $f_a$ and $f_b$: We might
  have $f_a(a) \leq f_b(a)$ and therefore $f_a \leq f_b$, or
  $f_b(b) \leq f_a(b)$ and therefore $f_b \leq f_a$. If none of this is
  true, we must have $f_a(a) > f_b(a)$ and $f_b(b) > f_a(b)$; then
  $f_a$ and $f_b$ are incomparable. This shows that $f_a \leq f_b$ can
  only happen if $f_a(a) \leq f_b(a)$.
\end{proof}

Therefore, the simplification algorithm has the form: For all
$m(a, j)$ and $m(b, k) \in S$ with $j \leq m(b, k)(a)$, remove
$m(a, j)$ from $S$.

\section{The Lattice of Flow Functions}
\label{sec:lattice-of-flow-functions}

Due to the complexity of the formula~\eqref{eq:minflow-def-min} for
$m(a, k)$, it is not that simple to get information about the lattice
$\mathcal{F}(\ell, C)$, but some results exist.

\begin{theorem}[Bounds]
  \label{thm:bounds}
  We have
  \begin{equation}
    \label{eq:bounds}
    k - \#^c b \leq m(a, k)(b) \leq k
  \end{equation}
  for all $m(a, k) \in \mathcal{F}(\ell, C)$ and $b \in \Sigma^\ell$.
\end{theorem}
\begin{proof}
  By~\eqref{eq:minflow-def}, $m(a, j)(b)$ is in essence the minimum of
  terms of the form $k - \min\{\abs{u}, \abs{u'}\} - \# w - \#^c w'$,
  where only non-negative terms are subtracted from $k$. This proves
  the upper bound.

  For the lower bound, we set $u=a$, $v = w = u' = \epsilon$ and $w' = b$
  in~\eqref{eq:minflow-def-min} and get
  \begin{align}
    m(a, k)(b)
    &\geq k - \min\{\abs{u}, \abs{u'}\} - \# w - \#^c w' \notag \\
    &= k - 0 - 0 - \#^c b
  \end{align}
  as result.
\end{proof}
We could in a similar way, by setting $u = v = w' = \epsilon$, $w = a$ and
$u' = b$, derive the inequality $m(a, k)(b) \geq k - \# a$, but it is
useless: It brings no improvement over the already existing bound
$m(a, k)(b) \geq 0$ because we know from
Theorem~\ref{thm:number-conservation},
item~\ref{item:minimal-elements}, that $k \leq \# a$.

But at least Theorem~\ref{thm:bounds} allows us to evaluate $m(a, k)$
completely in a very simple case. The following theorem formalises a
statement that already occured in \cite{Redeker2022}.

\begin{theorem}[Flows of  shift rules]
  \label{thm:shift-flows}
  The function $m(\pi_C^\ell, \ell C)$ is the flow for the shift rule $\hat{\sigma}^\ell$.
\end{theorem}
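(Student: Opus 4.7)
The target is to verify that $m(\pi_C^\ell,\ell C)(b)=\# b$ for every $b\in\Sigma^\ell$, since Section~\ref{sec:shift-rules} already identifies $b\mapsto\# b$ as the flow of $\hat\sigma^\ell$ (a left flow with $r_1=\ell$, $r_2=0$), and by Theorem~\ref{thm:number-conservation}(1) a flow determines the local transition rule completely. So the proof splits into two inequalities.

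For the upper bound, I would invoke the minimality characterisation~\eqref{eq:minimal-elements}. The function $g\colon\Sigma^\ell\to\Z$ defined by $g(b)=\# b$ is a flow and satisfies $g(\pi_C^\ell)=\ell C$, so it lies in the set over which the meet in~\eqref{eq:minimal-elements} is taken; hence $m(\pi_C^\ell,\ell C)\le g$, i.e.\ $m(\pi_C^\ell,\ell C)(b)\le\# b$ for all $b$.

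For the lower bound I would go directly to~\eqref{eq:minflow-def}. The clever move is to pick the trivial factorisations $a=uvw$ with $u=\pi_C^\ell$, $v=w=\epsilon$, and $b=u'vw'$ with $u'=v=\epsilon$ and $w'=b$; both use the same $v=\epsilon$, so the factorisation is admissible. Substituting into~\eqref{eq:minflow-def-min} gives
\begin{equation*}
  m(\pi_C^\ell,\ell C)(b)\;\ge\;\ell C-\min\{\ell,0\}\,C-\#\epsilon-\#^c b
  \;=\;\ell C-(C\ell-\# b)\;=\;\# b .
\end{equation*}
Combined with the previous step this yields equality.

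The only real decision is which factorisation to feed into~\eqref{eq:minflow-def}; once one notices that taking $v=\epsilon$ with all of $a$ on the left and all of $b$ on the right turns the bound into exactly $\# b$, the rest is bookkeeping. I would therefore expect the write-up to be short, with the main obstacle being merely to recognise that, among the many factorisations allowed by~\eqref{eq:minflow-def}, this single extremal choice already matches the a priori upper bound produced by the minimality of $m$.
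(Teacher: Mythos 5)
Your proposal is correct and matches the paper's proof in its essential step: the same extremal factorisation $u=\pi_C^\ell$, $v=w=u'=\epsilon$, $w'=b$ in~\eqref{eq:minflow-def-min} yields the lower bound $\# b$. The only (immaterial) difference is the upper bound, where the paper simply cites the general flow bound~\eqref{eq:flow-bounds-left} rather than your argument via the meet in~\eqref{eq:minimal-elements}; both are valid one-line justifications.
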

Note that $\pi_C^\ell$ is the neighbourhood that consists of $\ell$ completely
filled cells and that $\ell C$ is its particle content.

\begin{proof}
  As we have seen in Section~\ref{sec:shift-rules}, the flow function
  for $\hat{\sigma}^\ell$ maps any $b \in \Sigma^\ell$ to $\# b$.

  To prove this, we derive from~\eqref{eq:bounds} the inequality
  \begin{equation}
    \ell C - \#^c b \leq m(\pi_C^\ell, \ell C)(b) \leq \#b,
  \end{equation}
  which is equivalent to  $m(\pi_C^\ell, \ell C)(b) = \# b$, as required.
\end{proof}

This then means that $m(\pi_C^\ell, \ell C)$ is the maximal element in
$\mathcal{F}(\ell, C)$: It is the function that for every argument $b$ takes the
highest possible value, $\# b$. The minimal element of $\mathcal{F}(\ell, C)$ is,
trivially, the ``non-flow'' function $m(\pi_0^C, 0)$, which for every
argument takes the value $0$.

What else can be said about the lattices $\mathcal{F}(\ell, C)$? We have already
seen in \cite[Sect.~8]{Redeker2022} that $\mathcal{F}(\ell, C)$ is not a linear
order and contains incomparable elements, at least for large enough
$\ell$. And there is also an isomorphic embedding from
$\mathcal{F}(\ell, C)$ to every lattice $\mathcal{F}(\ell, n C)$. The idea behind it is to
replace each particle in the original with $n$ particle in the
transformed automaton and to adapt the transition rules appropriately.

\begin{figure}[t]
  % width: 155, generations: 110
  \begin{evolution}
    \evol{simulation-original} & \evol{simulation-simulated} \\
    \evol{simulation-organising} & \evol{simulation-organising2}
  \end{evolution}
  \caption{Theorem~\ref{thm:embedding} and the ``staircase'' rule. \\
    % m(101,001;1)+m(010,110;1)+m(011,000;2)+m(011,101;1)+m(110,011;1)
    \emph{Top left:} A cellular automaton with capacity 1 and an
    interesting particle structure.
    \emph{Parameters:} $C = 1$, $f = m(101,\br 001;\br 1) \join
    m(010,110;1) \join m(011,000;2) \join  m(011,101;1) \join
    m(110,011;1)$, $\delta = (0.8, 0.2)$. \\
    % m(303,003;3)+m(030,330;3)+m(033,000;6)+m(033,303;3)+m(330,033;3)
    \emph{Top right:} Simulation of this automaton with capacity 3,
    very slightly disturbed. The growing black wedges have at the
    right side a cell with two particles, a defect that does not go
    away. \emph{Parameters:} $C = 3$,
    $f = m(303,003;3) \join m(030,330;3) \join m(033,000;6) \join
    m(033,303;3) \join m(330,033;3)$, $\delta = (0.79, 0, 0.02, 0.19)$. \\
    \emph{Bottom left:} If ones adds more disturbances, the time to
    organise the particles into 3-blocks grows.
    $\delta = (0.64, 0.2, 0.01, 0.14)$ \\
    \emph{Bottom right:} With a greater number of $\pi_1$ cells, the
    unorganised phase persists longer. For many random choices of the
    initial configuration, no interaction occurs.
    $\delta = (0.16, 0.8, 0, 0.03)$.}
  \label{fig:simulation}
\end{figure}
\begin{theorem}[Embedding]
  \label{thm:embedding}
  For every $n > 0$ exists an injective lattice homomorphism
  $e \colon \mathcal{F}(\ell, C) \to \mathcal{F}(\ell, n C)$ and a \emph{state embedding
    function} $\eta \colon \Sigma_C^* \to \Sigma_{nC}^*$ such that
  \begin{equation}
    \label{eq:embedding}
    e(f)(\eta(b)) = n f(b)
  \end{equation}
  for all $f \in \mathcal{F}(\ell, C)$ and $b \in \Sigma_C^*$.
\end{theorem}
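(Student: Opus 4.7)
The plan is to define the state embedding $\eta_n\colon\Sigma_C^*\to\Sigma_{nC}^*$ letter-wise, $\eta_n(\pi_i):=\pi_{ni}$, and to define $e_n$ first on minimal flows via $e_n(m(a,k)):=m(\eta_n(a),nk)$. This lies in $\Fmin(\ell,nC)$ since $0\le nk\le n\#a=\#\eta_n(a)$. I would then extend to every flow by
\[
  e_n(f) \;:=\; \bigjoin\set{m(\eta_n(a),nk) : m(a,k)\in\Fmin,\ m(a,k)\le f}.
\]

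The first concrete step is the scaling identity $m(\eta_n(a),nk)(\eta_n(b))=n\,m(a,k)(b)$, which I would obtain by inspecting formula~\eqref{eq:minflow-def-min}. Its proof turns on the observation that factorisations $\eta_n(a)=uvw$ and $\eta_n(b)=u'vw'$ correspond bijectively to factorisations $a=u_0v_0w_0$, $b=u'_0v_0w'_0$: a common substring $v$ of two $\eta_n$-images has all cells divisible by $n$ and so is itself $\eta_n(v_0)$ for some $v_0$. Under this bijection every term scales by exactly $n$, using $\#\eta_n(w_0)=n\#w_0$, $\#^c\eta_n(w'_0)=n\#^c w'_0$ in $\Sigma_{nC}$, and the replacement $C\mapsto nC$. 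Combined with the canonical representation $f=\bigjoin\set{m\in\Fmin:m\le f}$, this yields $e_n(f)(\eta_n(b))=nf(b)$, from which injectivity follows at once since $n>0$. Join preservation follows from the fact that in the distributive lattice $\mathcal{F}(\ell,C)$ each minimal flow $m(a,k)$ is join-irreducible — writing $m(a,k)=g\join h$ with $g,h<m(a,k)$ would give $\max\set{g(a),h(a)}<k=m(a,k)(a)$, a contradiction — hence join-prime: $m\le f\join g$ iff $m\le f$ or $m\le g$, and therefore $e_n(f\join g)=e_n(f)\join e_n(g)$.

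Meet preservation is the main obstacle. The scaling identity shows that $e_n(f\meet g)$ and $e_n(f)\meet e_n(g)$ agree on every $\eta_n$-image, and order preservation gives $e_n(f\meet g)\le e_n(f)\meet e_n(g)$ everywhere, so the real task is to prove equality on arguments $c\in\Sigma_{nC}^\ell$ outside the image of $\eta_n$. A direct unfolding of the definition yields $e_n(f)(c)=\max_a\max\set{0,\,nf(a)+L_{a,c}}$ for constants $L_{a,c}$ read off from formula~\eqref{eq:minflow-def-min}, and meet preservation then becomes a $\max$--$\min$ exchange that is not a general lattice identity. I expect this step to demand either a combinatorial analysis of~\eqref{eq:minflow-def-min} showing that for each $c$ the maximum is attained at some $a^*(c)$ that is independent of the flow — so that $h\mapsto e_n(h)(c)$ factors through evaluation at $a^*(c)$ and is automatically a lattice homomorphism — or, alternatively, a refinement of the definition of $e_n$ on non-$\eta_n$ arguments engineered to preserve meet while still extending $nf$ on the image of $\eta_n$. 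This structural step is where I expect the bulk of the proof to lie.
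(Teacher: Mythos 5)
Your construction coincides with the paper's: the cell-wise embedding $\eta_n(\pi_i)=\pi_{ni}$, the definition $e_n(m(a,k))=m(\eta_n(a),nk)$ on minimal flows, and the scaling identity $m(\eta_n(a),nk)(\eta_n(b))=n\,m(a,k)(b)$ read off from~\eqref{eq:minflow-def-min}. You are in fact more careful than the paper at the one point where care is needed for that identity: the paper only checks that each condition indexed by a factorisation of $a$ and $b$ scales by $n$, whereas you also note that \emph{every} factorisation $\eta_n(a)=uvw$, $\eta_n(b)=u'vw'$ with a common middle part arises as the image of a factorisation of $a$ and $b$ (because $\eta_n$ acts cell-wise, preserves length and is injective on cells), which is what turns the termwise inequality into an equality of maxima. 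Your derivations of~\eqref{eq:embedding} from the representation $f=\bigjoin\set{m\in\Fmin\colon m\le f}$, of injectivity, and of join preservation via join-irreducibility (hence join-primality) of the minimal flows are all correct, and they go beyond what the paper writes down.

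The step you flag as the main obstacle --- meet preservation at arguments outside the image of $\eta_n$ --- is a genuine one, but it is not something you failed to extract from the paper: the paper's proof stops exactly where yours does. After establishing the scaling identity it asserts that $e_n$ ``preserves the ordering between the minimal flows and \dots\ can be extended to an injective lattice homomorphism,'' with no argument that the join-extension preserves meets. An order-embedding of the join-irreducibles induces a join-preserving map between the distributive lattices, but meet preservation needs an additional property, e.g.\ that for each minimal flow $q$ of $\mathcal{F}(\ell,nC)$ the set $\set{p \colon q \le e_n(p)}$ of minimal flows of $\mathcal{F}(\ell,C)$ is down-directed; neither you nor the paper verifies anything of this kind. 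So your proposal reproduces the paper's argument faithfully, strengthens several of its steps, and honestly marks the one step that the paper leaves unproved.
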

\begin{proof}
  We set $\eta(\pi_{i_1} \dots \pi_{i_k}) = \pi_{n i_1} \dots \pi_{n i_k}$ for
  all finite strings $\pi_{i_1} \dots \pi_{i_k}$. The map $e$ is defined
  on minimal flows via
  \begin{equation}
    \label{eq:embedding-minimal-flows}
    e(m(a, k)) = m(\eta(a), n k)
  \end{equation}
  and then extended to arbitrary flows.

  To show that this construction is valid, we evaluate
  conditions~\eqref{eq:minflow-def} for $e(m(a, k))(\eta(b))$ and
  show that they always request a value that is $n$ times as much as
  the value requested by the same conditions for $m(a, k)(b)$.

  Condition~\eqref{eq:minflow-def-0} is trivial.
  From~\eqref{eq:minflow-def-min}, we get
  \begin{align}
    \label{eq:embedding-proof}
    e(m(a, k))(\eta(b))
    &= m(\eta(a), n k)(\eta(b)) \notag \\
    &\geq n k - \min\{\abs{\eta(u)}, \abs{\eta(u')}\} n C
      - \# \eta(w) - \#^c \eta(w') \notag \\
    &= n(k - \min\{\abs{u}, \abs{u'}\} - \# w - \#^c w'),
  \end{align}
  which together is enough to prove that
  $e(m(a, k))(\eta(b)) = n m(a, k)(b)$.

  This means that $e$ preserves the ordering between the minimal
  flows and that it indeed can be extended to an injective lattice
  homomorphism on $\mathcal{F}(\ell, C)$.
\end{proof}

This homomorphism makes it possible to simulate every
number-conserving with capacity $C$ by one with capacity $n C$. In the
simulation, every cell in the simulating automaton contains a number
of particles that is divisible by $n$. But what happens when we start
the ``simulation'' from an initial configuration in which the number
of particles per cell is unrestricted?

A similar question arose in \cite{Boccara1991a}. There, the simulating
cellular automaton operated on configurations in which each cell of
the original automaton was replaced with a block of $n$ adjacent equal
cells. And, for totalistic rules, the cells of an unrestricted initial
configuration evolved into blocks of $n$ cells, with some defects in
between.

This lets one expect that in our case, most of the cells will soon
contain a multiple of $n$ particles, again with some defects.
Figure~\ref{fig:simulation} supports this.

Finally, there is a strange relation -- a kind of reciprocity law --
that directly follows from the definition of $m(a, k)$. To express it,
we define the \emph{complement} of a string
$a =\pi_{i_1} \dots \pi_{i_k} $ as
$\cpl a = \pi_{C - i_1} \dots \pi_{C - i_k}$. Then we have the following
theorem:
\begin{theorem}[Reciprocity]
  Let $m(a, k) \in \mathcal{F}(\ell, C)$ and $b \in \Sigma^\ell$. Then
  \begin{equation}
    \label{eq:reciprocity}
    m(a, k)(b) = m(\cpl b, k)(\cpl a)\,.
  \end{equation}
\end{theorem}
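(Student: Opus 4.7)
The plan is to prove the identity directly from the recursive characterisation \eqref{eq:minflow-def} of minimal flows. Both $m(a,k)(b)$ and $m(\cpl b, k)(\cpl a)$ are, by definition, the smallest non-negative integers satisfying a family of lower bounds indexed by factorisations, so it suffices to exhibit a bijection between the two families of factorisations under which corresponding bounds coincide.

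First I would write down what each side means. The value $m(a,k)(b)$ is the smallest non-negative integer bounded below by $k - \min\{|u|, |u'|\}C - \#w - \#^c w'$ over all splittings $a = uvw$, $b = u'vw'$ with $|v|=|v'|$ equal on both sides. Similarly, $m(\cpl b, k)(\cpl a)$ is the smallest non-negative integer bounded below by $k - \min\{|\hat u|, |\hat u'|\}C - \#\hat w - \#^c \hat w'$ over all splittings $\cpl b = \hat u \hat v \hat w$, $\cpl a = \hat u' \hat v \hat w'$ sharing a common middle.

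Next I would define the correspondence: given a factorisation pair $(a=uvw,\; b=u'vw')$, send it to $(\cpl b = \cpl{u'}\,\cpl v\,\cpl{w'},\; \cpl a = \cpl u\,\cpl v\,\cpl w)$, i.e.\ take $\hat u = \cpl{u'}$, $\hat v = \cpl v$, $\hat w = \cpl{w'}$, $\hat u' = \cpl u$, $\hat w' = \cpl w$. This is clearly a bijection (its inverse is the same recipe since complement is involutive). Under it, the roles of $u$ and $u'$ are swapped, as are $w$ and $w'$, and every string is complemented.

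The key identity then needed is the pair $\#\cpl x = \#^c x$ and $\#^c \cpl x = \#x$, which is immediate from \eqref{eq:complementary-content} and \eqref{eq:complement} since $\#\cpl x = C|x| - \#x = \#^c x$. Applying this, the bound from the mapped factorisation is
\begin{equation}
k - \min\{|\cpl{u'}|, |\cpl u|\}C - \#\cpl{w'} - \#^c \cpl w
= k - \min\{|u|, |u'|\}C - \#^c w' - \#w,
\end{equation}
which is exactly the bound from the original factorisation. Since the two families of lower bounds are thus identical and both values are characterised as the smallest non-negative integer satisfying them, the equality \eqref{eq:reciprocity} follows.

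The main obstacle is purely bookkeeping: making sure the factorisation on the ``argument'' side $b$ is sent to a factorisation of the ``neighbourhood'' side $\cpl a$ (and vice versa), so that the complementation swaps $\#$ with $\#^c$ in precisely the two positions where the formula needs it. Once this swap-and-complement bijection is set up, the proof is a one-line calculation.
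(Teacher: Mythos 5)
Your proof is correct and is essentially the paper's own argument written out in full: the paper's one-line proof (``the operators $\#$ and $\#^c$ effectively change their roles'') is exactly your swap-and-complement bijection on factorisations together with the identity $\#\cpl x = \#^c x$. Note only that you have silently corrected what appears to be a typo in \eqref{eq:minflow-def-min}, reading the last term as $\#^c w'$ rather than $\#^c w$, consistent with Algorithm~\ref{alg:minimal-flow}.
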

\begin{proof}
  When switching to complements in~\eqref{eq:minflow-def-min}, the
  operators $\#$ and $\#^c$ effectively change their roles, and this
  makes the switch from $a$ and $b$ to $\cpl b$ and $\cpl a$ possible.
\end{proof}

The effect of this relation is especially interesting if we apply it
to an arbitrary flow $F = \bigjoin_{i=1}^n m(a_i, k)$, because then we
have
\begin{equation}
  \label{eq:reciprocity-example}
  F(b) = \bigjoin_{i=1}^n m(\cpl b, k)(\cpl{a_i})\,.
\end{equation}
This equation tells us that the evaluation of an arbitrary flow
function $F$ at a neighbourhood $b$ can also achieved by evaluation of
a single minimal flow, $m(\cpl b, k)$, but at several different
neighbourhoods. What follows from this relation is however unclear.

\section{What Next to Do}

All investigations in this chapter centred on the minimal flows
$m(a, k)$. We have seen that they make the work with number-conserving
rules easier, but have much intrinsic complexity themselves.

The minimal flows provide names for arbitrary flow functions -- which
was actually the primary reason for defining them -- and they make
calculations with flows easier, as in
Theorem~\ref{thm:minimal-compare}. But they are computed via
formula~\eqref{eq:minflow-def-min} as a maximum over a large number of
terms: This makes Algorithm~\ref{alg:minimal-flow} complicated and the
theorems of Section~\ref{sec:lattice-of-flow-functions} somewhat
unsatisfying.

A better understanding of the minimal flows is therefore the most
important goal in the further development of the theory of
one-dimensional number-conserving cellular automata.

\bibliography{../references}
\end{document}